\tikzset{
  treenode/.style = {align=center, inner sep=2pt, text centered,
    font=\sffamily},
  arn_r/.style = {treenode, circle, black, font=\sffamily\bfseries, draw=black,
    text width=1.5em},
    arn_t/.style = {treenode, circle, black, thick, double, font=\sffamily\bfseries, draw=black,
    text width=1.5em},
  every edge/.append style={anchor=south,auto=falseanchor=south,auto=false,font=3.5 em},
}
\newcommand{\std}{\textsl{std}}
  \newtheorem{lem}[theorem]{Lemma}
  \renewenvironment{lemma}{\begin{lem}}{\end{lem}}
  \crefname{lem}{Lemma}{Lemmas}
\def\dd{\mathinner{.\,.}}
\newcommand{\cO}{\mathcal{O}}
 \newcommand{\defproblem}[3]{
  \vspace{2mm}
\noindent\fbox{
  \begin{minipage}{0.96\textwidth}
  #1\\
  {\bf{Input:}} #2  \\
  {\bf{Output:}} #3
  \end{minipage}
  }
  \vspace{2mm}
}
\begin{document}
\title{Optimal Computation of Avoided Words}

\author{
Yannis Almirantis \inst{1}
\and
Panagiotis Charalampopoulos \inst{2}
\and
Jia Gao \inst{2}
\and
Costas~S.~Iliopoulos \inst{2}
\and
Manal Mohamed \inst{2}
\and
Solon~P.~Pissis \inst{2}
\and \\
Dimitris Polychronopoulos  \inst{3}
}

\institute{
    National Center for Scientific Research Demokritos, Athens, Greece 
\email{yalmir@bio.demokritos.gr}
\and 
    Department of Informatics, King's College London, UK\\
    \email{\{panagiotis.charalampopoulos,jia.gao,costas.iliopoulos,\\manal.mohamed,solon.pissis\}@kcl.ac.uk}\\[1ex]
\and 
MRC Clinical Sciences Centre, Imperial College London, UK\\
 \email{d.polychronopoulos@csc.mrc.ac.uk}  
   }

\date{}

\maketitle 
   
\begin{abstract}
The deviation  of the observed frequency of a word $w$ from its expected frequency in a given sequence $x$ is used to determine whether or not the word is {\em avoided}. This concept is particularly useful in DNA linguistic analysis. The value of the standard deviation of $w$, denoted by $\std(w)$, effectively characterises the extent of a word by its edge contrast in the context in which it occurs. A word $w$ of length $k>2$ is a $\rho$-avoided word in $x$ if $\std(w) \leq \rho$, for a given threshold $\rho < 0$. Notice that such a word may be completely {\em absent} from $x$. Hence computing all such words na\"{\i}vely can be a very time-consuming procedure, in particular for large $k$. In this article, we propose an $\cO(n)$-time and $\cO(n)$-space algorithm to compute all $\rho$-avoided words of length $k$ in a given sequence $x$ of length $n$ over a fixed-sized alphabet. We also present a time-optimal $\cO(\sigma n)$-time and $\cO(\sigma n)$-space algorithm to compute all $\rho$-avoided words (of any length) in a sequence of length $n$ over an alphabet of size $\sigma$. Furthermore, we provide a tight asymptotic upper bound for the number of $\rho$-avoided words and the expected length of the longest one. We make available an open-source implementation of our algorithm. Experimental results, using both real and synthetic data, show the efficiency of our implementation. 
\end{abstract}

\section{Introduction}
%

The one-to-one mapping of a DNA molecule to a sequence of letters suggests that DNA analysis can be modelled within the framework of formal language theory~\cite{10.2307/29774782}. For example, a region within a DNA sequence can be considered as a ``word'' on a fixed-sized alphabet in which some of its natural aspects can be described by means of certain types of automata or grammars. However, a linguistic analysis  of the DNA  needs to take into account many distinctive physical and biological characteristics of such sequences:  DNA contains coding regions that encode for polypeptide chains associated with biological functions; and non-coding regions, most of which are not linked to any particular function. Both appear to have many statistical features in common with natural languages~\cite{1994PhRvL..73.3169M}. 

A computational tool oriented towards the systematic search for avoided words is particularly useful for {\em in silico} genomic research analyses. The search for {\em absent words} is already undertaken in the recent past and several results exist~\cite{nullrly}. However, words which may be present in a genome or in genomic sequences of a specific role (e.g.,~protein coding segments, regulatory elements, conserved non-coding elements etc) but they are strongly underrepresented---as we can estimate on the basis of the frequency of occurrence of their longest proper factors---may be of particular importance. They can be words of nucleotides which are hardly tolerated because they negatively influence the stability of the chromatin or, more generally, the functional genomic conformation; they can represent targets of restriction endonucleases which may be found in bacterial and viral genomes; or, more generally, they may be short genomic regions whose presence in wide parts of the genome are not tolerated for less known reasons. The understanding of such avoidances is becoming an interesting line of research (for recent studies, see~\cite{DBLP:conf/spire/BelazzouguiC15,Rusinov2015}). 

On the other hand, short words of nucleotides may be systematically avoided in large genomic regions or whole genomes for entirely different reasons: just because they play important signaling roles which restrict their appearance only in specific positions: consensus sequences for the initiation of gene transcription and of DNA replication are well-known such oligonucleotides. Other such cases may be insulators, sequences anchoring the chromatin on the nuclear envelope like lamina-associated domains, short sequences like dinucleotide repeat motifs with enhancer activity, and several other cases. Again, we cannot exclude that this area of research could lead to the identification of short sequences of regulatory activities still unknown.

Brendel et al. in \cite{brendel1986linguistics} initiated research into the linguistics of nucleotide sequences that focuses on  the concept of words in continuous languages---languages devoid of blanks---and introduced an operational definition of words. The authors suggested a method to measure, for each possible word $w$ of length $k$, the deviation of its observed frequency from the expected frequency in a given sequence.  The values of the standard deviation, denoted by $\std(w)$, were then used to identify words that are avoided among all possible words of length $k$. The typical length of avoided (or of overabundant) words of the nucleotide language was found to range from 3 to 5 (tri- to pentamers).
The statistical significance of the avoided words was shown to reflect their biological importance. This work, however, was based on the very limited sequence data available at the time: only DNA sequences from two viral and one bacterial genomes were considered. Also note that $k$ might change when considering eukaryotic genomes, the complex dynamics and function of which might impose a more demanding analysis.   

\noindent \textbf{Our contribution.} The computational problem can be described as follows. Given a sequence $x$ of length $n$, an integer $k$, and a real number $\rho < 0$, compute the set of $\rho$-avoided words of length $k$, i.e. all words $w$ of length $k$ for which $\std(w) \leq \rho$. We call this set the {\em $\rho$-avoided words} of length $k$ in $x$. Brendel et al.~did not provide an efficient solution for this computation~\cite{brendel1986linguistics}. Notice that such a word may be completely absent from $x$. Hence the set of $\rho$-avoided words can be na\"{\i}vely computed by considering all possible $\sigma^k$ words, where $\sigma$ is the size of the alphabet. Here we present an $\cO(n)$-time and $\cO(n)$-space algorithm for computing all $\rho$-avoided words of length $k$ in a sequence $x$ of length $n$  over a fixed-sized alphabet. We also present a time-optimal $\cO(\sigma n)$-time and $\cO(\sigma n)$-space algorithm to compute all $\rho$-avoided words (of any length) over an integer alphabet of size $\sigma$. Furthermore, we provide a tight asymptotic upper bound for the number of $\rho$-avoided words and the expected length of the longest one. We make available an open-source implementation of our algorithm. Experimental results, using both real and synthetic data, show its efficiency and applicability. Specifically, using our method we confirm that restriction endonucleases which target self-complementary sites are not found in eukaryotic sequences~\cite{Rusinov2015}.

\section{Terminology and Technical Background} 
\subsection{Definitions and Notation}
We begin with basic definitions and notation generally following~\cite{CHL07}. Let $x=x[0]x[1]\dd x[n-1]$ be a \textit{word} of \textit{length} $n=|x|$ over a finite ordered \textit{alphabet} $\Sigma$ of size $\sigma = |\Sigma|=\cO(1)$. For two positions $i$ and $j$ on $x$, we denote by $x[i\dd j]=x[i]\dd x[j]$ the \textit{factor} (sometimes called \textit{subword}) of $x$ that starts at position $i$ and ends at position $j$ (it is empty if $j < i$), and by $\varepsilon$ the \textit{empty word}, word of length 0.  We recall that a prefix of $x$ is a factor that starts at position 0 
($x[0\dd j]$) and a suffix is a factor that ends at position $n-1$ 
($x[i\dd n-1]$), and that a factor of $x$ is a \textit{proper} factor if 
it is not $x$ itself. A factor of $x$ that is neither a prefix nor a suffix of $x$ is called an $\textit{infix}$ of $x$.

Let $w=w[0]w[1]\dd w[m-1]$ be a word, $0<m\leq n$. 
  We say that there exists an \textit{occurrence} of $w$ in $x$, or, more 
simply, that $w$ \textit{occurs in} $x$, when $w$ is a factor of $x$.
  Every occurrence of $w$ can be characterised by a starting position in $x$. 
  Thus we say that $w$ occurs at the \textit{starting position} $i$ in $x$ 
when $w=x[i \dd i + m - 1]$. Further let $f(w)$ denote the \textit{observed frequency}, that is, the number of occurrences of $w$ in word $x$. If $f(w) = 0$ for some word $w$, then $w$ is called \textit{absent}, otherwise, $w$ is called \textit{occurring}.

By $f(w_p)$, $f(w_s)$, and $f(w_i)$ we denote the observed frequency of the longest proper prefix $w_p$, suffix $w_s$, and infix $w_i$ of $w$ in $x$, respectively. We can now define the \textit{expected frequency} of word $w$ in $x$ as in Brendel et al.~\cite{brendel1986linguistics}:  

\begin{equation} \label{eq:1}
E(w) =  \frac {f(w_p) \times f(w_s)}{f(w_i)}, \text{ if~ } f(w_i) >0;  \text{~else~}  E(w) = 0.
\end{equation}
The above definition can be explained intuitively as follows. Suppose we are given $f(w_p)$, $f(w_s)$, and $f(w_i)$. Given an occurrence of $w_i$ in $x$, the probability of it being preceded by $w[0]$ is $\frac {f(w_p)}{f(w_i)}$ as $w[0]$ precedes exactly $f(w_p)$ of the $f(w_i)$ occurrences of $w_i$. Similarly,  this occurrence of $w_i$ is also an occurrence of $w_s$ with probability $\frac {f(w_s)}{f(w_i)}$. Although these two events are not always independent, the product $\frac {f(w_p)}{f(w_i)} \times \frac {f(w_s)}{f(w_i)}$ gives a good approximation of the probability that an occurrence of $w_i$ at position $j$ implies an occurrence of $w$ at position $j-1$. It can be seen then that by multiplying this product by the number of occurrences of $w_i$ we get the above formula for the expected frequency of $w$.

\noindent Moreover, to measure the deviation of the observed frequency of a word $w$ from its expected frequency in $x$, we define the {\em standard deviation} ($\chi^2$ test) of $w$ as:

\begin{equation} \label{eq:2}
 \std(w) = \frac {f(w)-E(w)}{max\{ \sqrt {E(w)}, 1\}}.
\end{equation}

\noindent For more details on the {\em biological} justification of these definitions see ~\cite{brendel1986linguistics}.

Using the above definitions and a given threshold, we are in a position to classify a word $w$ as either {\em avoided} or {\em common} in $x$. In particular, for a given threshold $\rho < 0$, a word $w$ is called $\rho$-{\em avoided} if $\std(w) \leq \rho$. In this article, we consider the following 
computational problem. 

{\defproblem{\textsc{AvoidedWordsComputation}}{A word $x$ of length $n$, an integer $k>2$, and a real number $\rho < 0$}{All $\rho$-avoided words of length $k$ in $x$}}


\subsection{Suffix Trees}

In our algorithm, suffix trees are used  extensively as computational tools. For a general introduction to suffix trees, see~\cite{CHL07}.

The \textit{suffix tree} $\mathcal{T}(x)$ of a non-empty word $x$ of length $n$ is a compact trie representing all suffixes of $x$, the nodes of the trie which become nodes of the suffix
tree are called {\em explicit} nodes, while the other nodes are called {\em implicit}. Each edge
of the suffix tree can be viewed as an upward maximal path of implicit nodes starting with an explicit node. Moreover, each node belongs to a unique path of that kind. Then, each node of the trie can be represented in the suffix tree by the edge it belongs to and an index within the corresponding path.

We use  $\mathcal{L}(v)$ to  denote the \textit{path-label} of a node $v$, i.e., the concatenation of the edge labels along the path from the root to $v$. We say that $v$ is  path-labelled  $\mathcal{L}(v)$. Additionally, $\mathcal{D}(v)= |\mathcal{L}(v)|$ is used to denote  the \textit{word-depth} of node $v$. Node  $v$ is  a \textit{terminal} node, if and only if, $\mathcal{L}(v) = x[i \dd n]$, $0 \leq i < n$; here $v$ is also labelled with index $i$.    It should be clear that  each  occurring word $w$ in  $x$ is uniquely represented by either an explicit or implicit node of $\mathcal{T}(x)$. The \textit{suffix-link} of a node $v$  with path-label $\mathcal{L}(v)= ay$ is a pointer to the node path-labelled $y$, where  $a \in \Sigma$ is a single letter and $y \in  \Sigma^*$ is a word. The suffix-link of $v$ exists if $v$ is a non-root internal node of $\mathcal{T}(x)$. 

In any standard implementation of the suffix tree, we assume that each node of the suffix tree is able to access its parent. Note that, once $\mathcal{T}(x)$ is constructed, it can be traversed to compute the word-depth $\mathcal{D}(v)$ for  each node $v$. The tree is traversed in a depth-first manner, for each  node $v$. Let $u$ be the parent of $v$. Then the word-depth $\mathcal{D}(v)$ is computed by adding  $\mathcal{D}(u)$ to the length of the label of edge $(u,v)$. If $v$ is the root then $\mathcal{D}(v) = 0$. Additionally, a depth-first traversal of $\mathcal{T}(x)$ allows us to count,  for  each  node $v$, the number of terminal nodes in the subtree rooted at $v$, denoted by $\mathcal{C}(v)$,   as follows. When internal node $v$  is visited, $\mathcal{C}(v)$ is computed by adding up $\mathcal{C}(u)$ of all the nodes $u$, such that $u$ is a child of $v$, and then $\mathcal{C}(v)$ is incremented by 1 if $v$ itself is a terminal node. If a node $v$ is a leaf  then $\mathcal{C}(v) = 1$.

\section{Useful Properties}
In this section, we provide some useful insights of computational nature which were not
considered by Brendel et al.~\cite{brendel1986linguistics}. By the definition of $\rho$-avoided words it follows that a word $w$ may be $\rho$-avoided even if it is absent from $x$. In other words, $\std(w) \leq \rho$ may hold for either  $f(w) > 0$ (occurring) or $f(w) = 0$ (absent).

This means that a na\"{\i}ve computation should consider {\em all} possible $\sigma^k$ words. Then for each possible word $w$, the value of $\std(w)$ can be computed via pattern matching on the suffix tree. In particular we can search for the occurrences of $w$, $w_p$, $w_s$, and $w_i$ in time 
$\cO(k)$~\cite{CHL07}. In order to avoid this inefficient computation, we exploit the following crucial lemmas.
\begin{definition}[\cite{Barton2014}]~\label{def:maws}
An absent word $w$ of $x$ is {\em minimal} if and only if all its proper factors occur in $x$. 
\end{definition}
\begin{lemma}\label{lem1}
Any absent $\rho$-avoided word $w$ in $x$ is a minimal absent word of $x$.
\end{lemma}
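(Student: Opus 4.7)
The plan is to unpack the definitions and show that the hypothesis $\std(w) \leq \rho < 0$ combined with absence forces the three quantities $f(w_p)$, $f(w_s)$, and $f(w_i)$ all to be strictly positive, and then to observe that every proper factor of $w$ is a factor of either $w_p$ or $w_s$.

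First I would substitute $f(w) = 0$ (since $w$ is absent) into the definition (\ref{eq:2}) of $\std(w)$ to get
\[
\std(w) \;=\; \frac{-E(w)}{\max\{\sqrt{E(w)},\,1\}}.
\]
Since the denominator is always strictly positive, the condition $\std(w) \leq \rho < 0$ forces $E(w) > 0$. Looking at the definition (\ref{eq:1}) of the expected frequency, $E(w) > 0$ can only occur in the first branch, which requires $f(w_i) > 0$ and $f(w_p) \cdot f(w_s) > 0$. Hence the longest proper prefix $w_p$, the longest proper suffix $w_s$, and the longest proper infix $w_i$ all occur in $x$.

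Next I would verify that this already implies that \emph{every} proper factor of $w$ occurs in $x$. Let $u$ be a proper factor of $w$, say $u = w[i\dd j]$ with $j - i + 1 < |w|$. Then $i > 0$ or $j < |w| - 1$, so $u$ is a factor of $w_s = w[1\dd |w|-1]$ or of $w_p = w[0 \dd |w|-2]$. Since both $w_p$ and $w_s$ occur in $x$, so does every factor of theirs, and therefore so does $u$. Combined with $f(w) = 0$, this is precisely the definition of a minimal absent word (Definition~\ref{def:maws}), completing the argument.

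There is really no hard step here; the only thing worth double-checking is the edge case where the denominator of $\std(w)$ equals $1$ (when $E(w) < 1$), which does not change the sign analysis since $\std(w) < 0$ still forces $E(w) > 0$. The whole proof is a short chain of implications extracted directly from the definitions, so the main ``obstacle'' is simply making sure the covering argument for proper factors by $w_p$ and $w_s$ is stated carefully.
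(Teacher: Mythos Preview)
Your proof is correct and follows essentially the same approach as the paper: deduce $E(w)>0$ from $\std(w)<0$ and $f(w)=0$, then use the definition of $E(w)$ to force $f(w_p),f(w_s)>0$. Your version is in fact slightly more explicit than the paper's, since you spell out the covering argument that every proper factor of $w$ lies inside $w_p$ or $w_s$, whereas the paper simply asserts that $f(w_p),f(w_s)>0$ implies all proper factors occur.
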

\begin{proof}
For $w$ to be a $\rho$-avoided word it must hold that
$$\std(w) = \frac {f(w)-E(w)}{max\{ \sqrt {E(w)}, 1\}}\leq \rho < 0.$$
This implies that $f(w)-E(w)<0$, which in turn implies that $E(w)>0$ since $f(w) = 0$.
From $E(w) =  \frac {f(w_p) \times f(w_s)}{f(w_i)}>0$, we conclude that $f(w_p)>0$ and $f(w_s)>0$ must hold. Since $f(w) = 0$, $f(w_p)>0$, and $f(w_s)>0$, $w$ is a minimal absent word of $x$:
all proper factors of $w$ occur in $x$. \qed
\end{proof}
\begin{lemma}\label{lem:occ}
Let $w$ be a word occurring in $x$ and $\mathcal{T}(x)$ be the suffix tree of $x$. 
Then, if $w_p$  is a  path-label of an implicit node of $\mathcal{T}(x)$, $std(w) \geq 0$.
\end{lemma}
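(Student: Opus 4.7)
The plan is to unpack the structural meaning of ``implicit node'' in $\mathcal{T}(x)$ to obtain $f(w_p)=f(w)$, and then combine this with the trivial inequality $f(w_s)\leq f(w_i)$ to conclude $E(w)\leq f(w)$, from which $\std(w)\geq 0$ follows immediately from definition~\eqref{eq:2}.

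First I would note that since $w$ occurs in $x$, all its proper factors occur in $x$ as well, so $f(w_p),f(w_s),f(w_i)\geq 1$ and hence $E(w)$ is given by the first case of~\eqref{eq:1}. Writing $w=w_p\,a$ where $a=w[|w|-1]$, and $w_s=w_i\,a$, one gets the standing identities $w_p=w[0]\cdot w_i$ and $w_s=w_i\cdot a$.

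The core step is to use the hypothesis on $w_p$. Explicit nodes of $\mathcal{T}(x)$ are exactly the root, the terminal nodes, and the branching internal nodes; therefore, if $w_p$ is the path-label of an implicit node, then (i) $w_p$ is not a suffix of $x$ (otherwise it would label a terminal node), and (ii) all occurrences of $w_p$ in $x$ are followed by the same single letter (otherwise $\mathcal{T}(x)$ would branch at $w_p$, making it explicit). Because $w=w_p\,a$ does occur in $x$, that unique following letter must be $a$. Consequently every occurrence of $w_p$ in $x$ extends to an occurrence of $w$, giving the key equality $f(w_p)=f(w)$.

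Finally, since every occurrence of $w_s=w_i\,a$ is in particular an occurrence of $w_i$, we have $f(w_s)\leq f(w_i)$. Plugging in,
\[
E(w)=\frac{f(w_p)\,f(w_s)}{f(w_i)}\;\leq\; f(w_p) \;=\; f(w),
\]
so $f(w)-E(w)\geq 0$ and hence $\std(w)\geq 0$. The only step with any real content is the structural one relating ``implicit node'' to unique right-extension and thereby to $f(w_p)=f(w)$; the rest is bookkeeping. The mildly delicate point to get right is ruling out the possibility that $w_p$ occurs as a suffix of $x$ (which would break the equality $f(w_p)=f(w)$), and this is precisely ensured by $w_p$ labelling an implicit, rather than terminal, node.
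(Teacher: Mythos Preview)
Your proof is correct and follows essentially the same line as the paper's: both use $f(w_s)\le f(w_i)$ to get $E(w)\le f(w_p)$, and then invoke the implicit-node hypothesis to obtain $f(w_p)=f(w)$, yielding $f(w)-E(w)\ge 0$. The only difference is that you spell out the structural justification for $f(w_p)=f(w)$ (unique right-extension and non-terminality of the implicit node), whereas the paper simply appeals to ``the definition of the suffix tree.''
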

\begin{proof}
Since $w$ occurs in $x$ it holds that $f(w_i) \geq f(w_s)$ and, hence, by the definition of $E(w)$, 
$f(w_p) \geq E(w)$. Furthermore, by the definition of the suffix tree, since $w$ occurs in $x$ and $w_p$ is a path-label of an implicit node then $f(w_p) = f(w)$. It thus follows that 
$f(w) - E(w) = f(w_p) - E(w) \geq 0$, and since $\max\{1,\sqrt{E(w)}\} > 0$, the claim holds. \qed
\end{proof}
\begin{lemma}\label{lem2}
The total number of $\rho$-avoided words of length $k>2$ in a word $x$ of length $n$ over an alphabet of size $\sigma$ is bounded from above by $\cO(\sigma n)$; in particular, this number is no more than $(\sigma + 1) n - k + 1$.
\end{lemma}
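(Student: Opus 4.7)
The plan is to split the $\rho$-avoided words of length $k$ into two disjoint classes based on whether they occur in $x$ or are absent, bound each class separately, and add the bounds.

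\textbf{Occurring $\rho$-avoided words.} Any occurring word of length $k$ starts at some position $i \in \{0, 1, \dots, n-k\}$ of $x$, so the number of distinct factors of $x$ of length $k$ is at most $n-k+1$. This is an immediate upper bound on the number of occurring $\rho$-avoided words of length $k$, regardless of the threshold $\rho$. (In fact, \Cref{lem:occ} already tells us these occurring words must be such that $w_p$ is path-labelled by an explicit branching node of $\mathcal{T}(x)$, which gives an even tighter bound, but we will not need this refinement.)

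\textbf{Absent $\rho$-avoided words.} By \Cref{lem1}, every absent $\rho$-avoided word of $x$ is a minimal absent word of $x$. I would then invoke the well-known upper bound of $\sigma n$ on the total number of minimal absent words of a word of length $n$ over an alphabet of size $\sigma$ (established e.g.\ via a charging argument on the suffix tree / suffix automaton of $x$; see \cite{Barton2014} and references therein). This yields at most $\sigma n$ absent $\rho$-avoided words, of any length, hence in particular of length $k$.

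\textbf{Combining.} Summing the two bounds gives
\[
(n - k + 1) + \sigma n \;=\; (\sigma + 1)n - k + 1,
\]
which is $\cO(\sigma n)$ as claimed. The main obstacle is really just citing (or, if preferred, re-deriving in one line) the $\sigma n$ bound on the number of minimal absent words; once that is in hand the decomposition via \Cref{lem1} is the crux and the rest is arithmetic. No case analysis on $\rho$ is needed since the two partial bounds are independent of $\rho$.
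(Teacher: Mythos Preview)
Your proof is correct and follows essentially the same approach as the paper: split into occurring and absent $\rho$-avoided words, bound the former by $n-k+1$, invoke \Cref{lem1} together with the $\sigma n$ bound on minimal absent words (the paper cites~\cite{Mignosi02}) for the latter, and add.
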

\begin{proof}
By Lemma~\ref{lem1}, every $\rho$-avoided word is either occurring or a minimal absent word.
It is known that the total number of minimal absent words in $x$ is smaller than or equal to $\sigma n$~\cite{Mignosi02}. Clearly, the occurring $\rho$-avoided words in $x$ are at most $n - k + 1$. Therefore the lemma holds. \qed 
\end{proof}
\begin{example}\label{exa:suffixTree}
Consider the word $x=\texttt{AGCGCGACGTCTGTGT}$. Fig.~\ref{Fig:suffixTree} represents the suffix tree $\mathcal{T}(x)$. Note that word $\texttt{GCG}$ is represented by the explicit internal node $v$; whereas word $\texttt{TCT}$ is represented by the implicit node along the edge connecting the node labelled 15 and the node labelled 9. Consider node $v$ in $\mathcal{T}(x)$; we have that $\mathcal{L}(v) = \texttt{GCG}$, $\mathcal{D}(v) = 3$, and $\mathcal{C}(v)=2$.  

\begin{figure}[h]\begin{center}
\label{suffixtree}
\resizebox{!}{0.50\totalheight}
{
\begin{tikzpicture}[->,>=stealth',level/.style={level 1/.style={sibling distance=7cm},
  level 2/.style={sibling distance=2.8cm},
  level 3/.style={sibling distance=1.5cm},level distance = 3cm}] 
\node [arn_r] (tree){}
  child{node [arn_r](L1-C1){}
  child{node [arn_t](L2-C1){6} edge from parent node[left]{$\texttt{CGTC}\ldots$}}  
  child{node [arn_t](L2-C5){0} edge from parent node[right]{$\texttt{GCGC}\ldots$}}
  edge from parent node[left]{$\texttt{A}$}
}	
child{node [arn_r](L1-C2){}
  child{node [arn_r](L3-C6){} 
  child{node [arn_t]{4} edge from parent node[left]{$\texttt{ACGT}\ldots$}
  }
  child{node [arn_t]{2} edge from parent node[above]{$\texttt{CGAC}\ldots$}
  }
  child{node [arn_t]{7} edge from parent node[right]{$\texttt{TCTC}\ldots$}
  }
  edge from parent node[left]{$\texttt{G}$}
  }
  child{node [arn_t]{10} edge from parent node[right]{$\texttt{TGTGT}$}
  }
  edge from parent node[left]{$\texttt{C}$}
}
child{node [arn_r](L1-C3){}
child{node [arn_t]{5} edge from parent node[left]{$\texttt{ACGT}\ldots$}}
child{node [arn_r](vnode){$v$}
 child{node [arn_t]{3} edge from parent node[left]{$\texttt{ACGT}\ldots$}}
child{node [arn_t]{1} edge from parent node[above]{$\texttt{CGAC}\ldots$}}
edge from parent node[ above] {$\texttt{CG}$}}  
child{node [arn_t](L3-C2){14}
child{node [arn_t]{8} edge from parent node[above]{$\texttt{CTGTGT}$}}   
child{node [arn_t]{12} edge from parent node[right]{$\texttt{GT}$}}
edge from parent node[right]{$\texttt{T}$}}
edge from parent node[right]{$\texttt{G}$}}
child{node [arn_t](L1-C4){15}
child{node [arn_t] {9} edge from parent node[left]{$\texttt{CTGTGT}$}}
child{node [arn_t](L3-C4){13}
child{node [arn_t]{11} edge from parent node[right]{$\texttt{GT}$}}
edge from parent node[right]{$\texttt{GT}$}}
edge from parent node[right]{$\texttt{T}$}
};
\draw[semithick,dashed,->] (L1-C1) to [bend right=-20] (tree); 
\draw[semithick,dashed,->] (L1-C2) to [bend right=-20] (tree); 
\draw[semithick,dashed,->] (L1-C3) to [bend right=20] (tree); 
\draw[semithick,dashed,->] (L1-C4) to [bend right=20] (tree); 
\draw[semithick,dashed,->] (L3-C2) to [bend right=0] (L1-C4);
\draw[semithick,dashed,->] (L3-C4) to [bend right=-20] (L3-C2);
\draw[semithick,dashed,->] (L3-C6) to [bend right=0] (L1-C3);
\draw[semithick,dashed,->] (vnode) to [bend right=-20] (L3-C6);
\end{tikzpicture}
}
\end{center}
\caption{The suffix tree $\mathcal{T}(x)$ for $x= \texttt{AGCGCGACGTCTGTGT}$. Double-lined nodes represent terminal nodes labelled with the associated indices. The  suffix-links for non-root internal nodes are dashed.}
\label{Fig:suffixTree}
\end{figure}
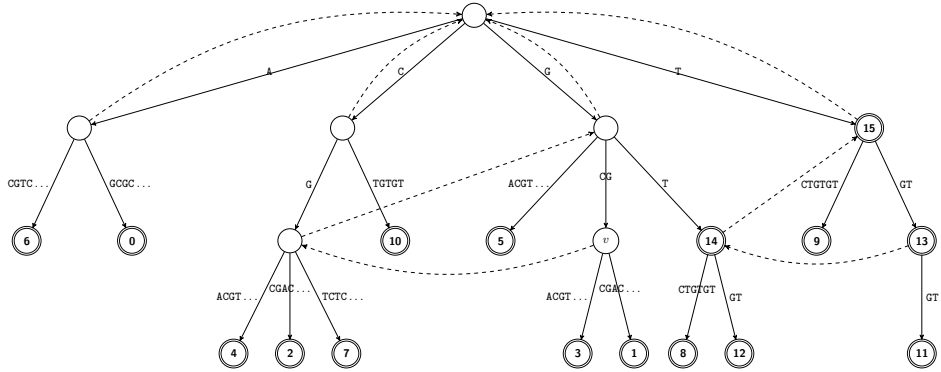
\end{example}

\begin{example}
Consider the word from Example~\ref{exa:suffixTree}, $k=3$, and $\rho=-0.4 $. 
\begin{itemize}
\item word  $w_1= \texttt{CGT}$, at position 7 of $x$, is an {\em occurring} $\rho$-avoided word: 
$$ E(w_1) = 3\times 3/6 = 1.5,\text{ } \std(w_1) =(1-1.5)/\sqrt{1.5} = -0.408248.$$
\item word $w_2 = \texttt{AGT}$ is an {\em absent} $\rho$-avoided word:  
$$ E(w_2) = 1\times 3/6 = 0.5,\text{ } \std(w_2) =(0- 0.5)/1 = -0.5.$$
\end{itemize}
\end{example}

\section{Avoided  Words Algorithm}
In this section, we present Algorithm  \Algo{AvoidedWords} for computing
all $\rho$-avoided words of length $k$ in a given word $x$. The algorithm builds the suffix tree $\mathcal{T}(x)$ for word $x$, and then prepares  $\mathcal{T}(x)$ to allow constant-time  observed frequency queries. This is mainly achieved by counting the terminal nodes in the subtree rooted at node $v$ for every node $v$ of  $\mathcal{T}(x)$.  Additionally during  this preprocessing, the algorithm  computes the word-depth of $v$ for every node $v$ of  $\mathcal{T}(x)$. By Lemma~\ref{lem1}, $\rho$-avoided words are classified as either occurring or minimal absent, therefore Algorithm \Algo{AvoidedWords} calls Routines \Algo{AbsentAvoidedWords} and \Algo{OccurringAvoidedWords} to compute both classes of $\rho$-avoided words in $x$. The outline of Algorithm \Algo{AvoidedWords} is as follows.

\begin{algo}{AvoidedWords}{$$x$, $k$, $\rho$$}
\SET {\mathcal{T}(x)}{\Call{SuffixTree}{x}}
\DOFOR{\mbox{each node  $v \in \mathcal{T}(x)$} }
\SET{\mathcal{D}(v)}{\mbox{word-depth of }v}
\SET{\mathcal{C}(v)}{\mbox{number of terminal nodes in the subtree rooted at }v}
\OD
\CALL{AbsentAvoidedWords}{x,k,\rho}
\CALL{OccurringAvoidedWords}{x,k,\rho}
\end{algo}

\subsection{Computing Absent Avoided  Words}

In Lemma~\ref{lem1}, we showed that each absent $\rho$-avoided word is a minimal absent word. Thus, Routine \Algo{AbsentAvoidedWords} starts by computing all minimal absent words in $x$; this can be done in time and space $\cO(n)$ for a fixed-sized alphabet or in time $\cO(\sigma n)$ for large alphabets~\cite{Barton2014}. Let $< (i,j), \alpha  >$ be a tuple representing  a minimal absent word in $x$, where for some minimal absent word $w$ of length $|w| > 2$,  $w = x[i \dd j]\alpha$. Notice that this representation is unique.

\begin{algo}{AbsentAvoidedWords}{$$x$, $k$, $\rho$$}
\SET{\mathcal{A}}{\Call{MinimalAbsentWords}{x}}
\DOFOR{\mbox{each tuple  $<(i,j),\alpha> \in \mathcal{A}$ such that $k =j-i+2$} }
\SET{u_p}{\Call{Node}{i,j}}
\IF {\Call{IsImplicit}{u_p}}
\SET{(u,v)}{\Call{Edge}{u_p}}
\SET{f_p}{\mathcal{C}(v)}
\ELSE
\SET{f_p}{\mathcal{C}(u_p)}
\FI
\SET{u_i}{\Call{Node}{i+1,j}}
\IF {\Call{IsImplicit}{u_i}}
\SET{(u,v)}{\Call{Edge}{u_i}}
\ACT{f_i \leftarrow f_s \leftarrow \mathcal{C}(v)}
\ELSE
\SET{f_i}{\mathcal{C}(u_i)}
\SET{u_s}{\Call{Child}{u_i,\alpha}}	
 \SET{f_s}{\mathcal{C}(u_s)}	
\FI
\SET{E}{f_p\times f_s/f_i}
\IF{(0 - E)/(max\{1,\sqrt{E}\}) \leq \rho}
\CALL{Report}{x[i\dd j]\alpha}
\FI
\OD
\end{algo}

Intuitively, the idea is to check the length of every minimal absent word. If a tuple $< (i,j), \alpha  >$ represents a minimal absent word $w$ of length $k = j-i+2$, then the value of  $\std(w)$ is computed to determine whether $w$ is an absent $\rho$-avoided word. Note that, if $w = x[i\dd j]\alpha$ is a minimal absent word, then $w_p= x[i\dd j]$, $w_i= x[i+1 \dd j]$, and $w_s = x[i+1\dd j]\alpha$ occur in $x$ by Definition~\ref{def:maws}. Thus, there are three (implicit or explicit) nodes in $\mathcal{T}(x)$  path-labelled  $w_p$, $w_i$, and $w_s$, respectively. The observed frequencies of $w_p$, $w_i$, and $w_s$ are already computed during the preprocessing of $\mathcal{C}$, which stores the number of terminal nodes in the subtree rooted at $v$, for each node $v$. 

Notice that for an explicit node $v$ path-labelled $w'= x[i' \dd j']$, the value $\mathcal{C}(v)$ represents the number of occurrences (observed frequency) of $w'$ in $x$; whereas for an implicit node along the edge $(u,v)$  path-labelled  $w''$, then the number of occurrences of $w''$ is equal to  $\mathcal{C}(v)$ (and not $\mathcal{C}(u)$). The implementation of this procedure is given in Routine \Algo{AbsentAvoidedWords}.

\subsection{Computing Occurring Avoided  Words}

Lemma~\ref{lem:occ} suggests that for each occurring $\rho$-avoided word $w$, $w_p$ is a path-label of an  explicit node $v$ of $\mathcal{T}(x)$. Thus, for each internal node $v$ such that  $\mathcal{D}(v)= k-1$ and $\mathcal{L}(v)= w_p$, Routine  \Algo{OccurringAvoidedWords} computes $\std(w)$, where $w =w_p \alpha$ is a path-label of a child (explicit or implicit) node of $v$. Note that if $w_p$ is a path-label of an explicit node $v$ then $w_i$ is  a path-label of an explicit node $u$ of $\mathcal{T}(x)$; node $u$ is well-defined and it is  the node pointed at by the suffix-link of $v$.  The implementation of this procedure is given in Routine \Algo{OccurringAvoidedWords}.

\begin{algo}{OccurringAvoidedWords}{$$x$, $k$, $\rho$$}
\SET {N}{\mbox{~an empty stack}}
\CALL{Push}{N, root(x)}
\DOWHILE{ N \mbox{~is not empty}}
\SET{u}{\Call{Pop}{N}}
\IF {u \mbox{~is not labelled as discovered}}
\CALL{Label}{u}
\FI
\DOFOR{\mbox{each edge  $(u,v)$ of $\mathcal{T}(x)$}}
\IF{\mathcal{D}(v) < k-1}
\CALL{Push}{N,v}
\ELSEIF{\mathcal{D}(v) = k-1}
\SET{f_p}{\mathcal{C}(v)}
\SET{f_i}{\mathcal{C}(\mbox{\textit{suffix-link}}[v])}
\DOFOR{\mbox{each child  $v'$ of $v$}}
	\SET{f_w}{\mathcal{C}(v')}
    \SET{\alpha}{\mathcal{L}(v')[k]}
	\SET{f_s}{\mathcal{C}(\Call{Child}{\mbox{\textit{suffix-link}}[v],\alpha})}
	\SET{E}{f_p\times f_s/f_i}
	\IF{(f_w-E)/(max\{1,\sqrt{E}\}) \leq \rho} 
	\CALL{Report}{\mathcal{L}(v')[0\dd k-1]}\FI
\OD
\FI
\OD
\OD
\end{algo}

\subsection{Algorithm Analysis}

\begin{lemma} \label{lem:cor}
Given a word $x$, an integer $k>2$, and a real number  $\rho < 0$,  Algorithm  \Algo{AvoidedWords} computes all $\rho$-avoided words of length $k$ in $x$.
\end{lemma}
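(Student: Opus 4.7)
The plan is to prove both directions of correctness: that every word reported is a $\rho$-avoided word of length $k$, and that every such word is reported. I would split the argument along the dichotomy provided by \cref{lem1}: every $\rho$-avoided word is either absent (and then necessarily a minimal absent word) or occurring. Since \Algo{AvoidedWords} performs the suffix tree preprocessing correctly---so that $\mathcal{D}(v)$ and $\mathcal{C}(v)$ are the word-depth and the number of terminal descendants of $v$, respectively---I can rely throughout on the standard fact that for any factor $y$ of $x$ represented by an (explicit or implicit) node lying on edge $(u,v)$, we have $f(y) = \mathcal{C}(v)$, and if $y$ is represented by an explicit node $v$ then $f(y) = \mathcal{C}(v)$.

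For the absent case I would argue as follows. By \cref{lem1}, any absent $\rho$-avoided word $w$ of length $k$ must be a minimal absent word, hence is enumerated by \Call{MinimalAbsentWords}{$x$} as some tuple $\langle (i,j),\alpha\rangle$ with $j-i+2=k$. It then remains to check that Routine \Algo{AbsentAvoidedWords} computes $f(w_p)$, $f(w_i)$, $f(w_s)$ correctly. Here $w_p = x[i\dd j]$, $w_i = x[i+1\dd j]$, and $w_s = x[i+1\dd j]\alpha$ all occur in $x$ by \cref{def:maws}, so each corresponds to a node (possibly implicit) in $\mathcal{T}(x)$. A short case split on whether this node is explicit or implicit, combined with the observation above on $\mathcal{C}$, shows that $f_p,f_i,f_s$ are assigned the correct frequencies; note in particular that when $w_i$'s node is implicit, $w_s$ extends $w_i$ along the same edge, whence $f_s = f_i$ is correct. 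Plugging these values into $E$ and $\std$ faithfully implements \eqref{eq:1} and \eqref{eq:2} with $f(w)=0$, so the word is reported precisely when it is $\rho$-avoided.

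For the occurring case the key observation is \cref{lem:occ}: any occurring $\rho$-avoided word $w$ has $\std(w) \leq \rho < 0$, which by the contrapositive of \cref{lem:occ} forces $w_p$ to be the path-label of an \emph{explicit} node $v$ of $\mathcal{T}(x)$. I would then show that \Algo{OccurringAvoidedWords} visits every explicit node at word-depth $k-1$ via its depth-first stack traversal (the pruning rule $\mathcal{D}(v)<k-1$ only descends further, so no such node is missed), and at each such $v$ enumerates every child, thereby considering every occurring word of length $k$ whose length-$(k-1)$ prefix is the path-label of an explicit node. Since $v$ is explicit, its suffix link points to an explicit node $u$ path-labelled $w_i$, and the child of $u$ whose edge label starts with $\alpha=\mathcal{L}(v')[k]$ represents $w_s$; thus $f_p,f_i,f_s,f_w$ are read off correctly from the $\mathcal{C}$ values, and $\std(w)$ is evaluated exactly as in \eqref{eq:2}. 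Consequently the routine reports precisely those length-$k$ occurring words satisfying $\std(w) \leq \rho$.

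Combining the two cases yields both soundness (each reported word is $\rho$-avoided of length $k$) and completeness (no $\rho$-avoided word of length $k$ is missed). The main obstacle I expect is the bookkeeping in the occurring case: one must justify carefully that the suffix-link target of an explicit node at depth $k-1$ is itself explicit, that its appropriate child exists and corresponds to $w_s$, and that the depth-first traversal indeed reaches every relevant $v$; the absent case is essentially a verification exercise once \cref{lem1} and the properties of $\mathcal{C}$ are in hand.
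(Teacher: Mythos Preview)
Your proposal is correct and follows essentially the same approach as the paper: split into the absent and occurring cases, invoke \cref{lem1} to reduce the absent case to minimal absent words handled by \Algo{AbsentAvoidedWords}, and invoke \cref{lem:occ} to reduce the occurring case to children of explicit nodes at word-depth $k-1$ handled by \Algo{OccurringAvoidedWords}. The paper's own proof is considerably more terse---it simply cites the two lemmas and asserts that each routine ``considers'' and ``verifies'' the relevant words---whereas you additionally spell out why the frequency lookups via $\mathcal{C}$, the implicit/explicit case split, and the suffix-link navigation are correct; this extra bookkeeping is sound and strictly subsumes the paper's argument.
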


\begin{proof}
By definition, a $\rho$-avoided word $w$ is either an absent $\rho$-avoided word or an occurring one. Hence,
the proof of correctness relies on Lemma~\ref{lem1} and Lemma~\ref{lem:occ}. 
First,  Lemma~\ref{lem1} indicates that an absent $\rho$-avoided word in $x$ is necessarily a minimal absent word. Routine \Algo{AbsentAvoidedWords} considers each minimal absent word $w$ and verifies if $w$ is a $\rho$-avoided word of length $k$.

Second,  Lemma~\ref{lem:occ} indicates that  for each occurring $\rho$-avoided word $w$, $w_p$ is a path-label of an  explicit node $v$ of $\mathcal{T}(x)$. Routine \Algo{OccurringAvoidedWords} considers each child of such node of word-depth $k$, and verifies if its path-label is a $\rho$-avoided word. \qed
\end{proof}

\begin{lemma} 
\label{lem:ana}
Given a word $x$ of length $n$ over a fixed-sized alphabet, an integer $k>2$ and a real number  $\rho < 0$,  Algorithm  \Algo{AvoidedWords} requires time and space $\cO(n)$; for integer alphabets, it requires time $\cO(\sigma n)$. 
\end{lemma}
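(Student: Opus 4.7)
The plan is to analyze each phase of Algorithm \Algo{AvoidedWords} separately and then sum the costs, keeping track of where the bound degrades from $\cO(n)$ to $\cO(\sigma n)$ as the alphabet changes.

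First, I would settle the common preprocessing. The suffix tree $\mathcal{T}(x)$ can be built in $\cO(n)$ time and space for a fixed-sized alphabet and in $\cO(\sigma n)$ time for an integer alphabet. One depth-first traversal then fills in $\mathcal{D}(v)$, by adding edge lengths, and $\mathcal{C}(v)$, by summing child values and incrementing at terminal nodes, in total time $\cO(n)$. I would also equip $\mathcal{T}(x)$ with suffix-link pointers (available from standard constructions in $\cO(n)$) and with a data structure supporting letter-indexed child access in $\cO(1)$ time --- trivial per-letter arrays for fixed alphabets, and the standard $\cO(\sigma n)$-space child representation for integer alphabets.

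Next, I would analyze Routine \Algo{AbsentAvoidedWords}. By the result of Barton et al.\ cited in the paper, the set $\mathcal{A}$ of minimal absent words is computable in $\cO(n)$ time for fixed alphabets and $\cO(\sigma n)$ for integer alphabets, with $|\mathcal{A}|=\cO(\sigma n)$ as in the proof of \cref{lem2}. Each tuple $\langle (i,j),\alpha \rangle \in \mathcal{A}$ with $k=j-i+2$ induces $\cO(1)$ work provided the node lookups for $w_p=x[i\dd j]$ and $w_i=x[i+1\dd j]$ are constant-time; this is achieved either by having the minimal-absent-word procedure emit pointers into $\mathcal{T}(x)$ directly (its internal workings already locate these loci) or by augmenting $\mathcal{T}(x)$ with a weighted-ancestor structure answering $\cO(1)$ queries after $\cO(n)$ preprocessing. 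With both of these in place, the remaining constant number of arithmetic operations yields total cost $\cO(n)$ for fixed alphabets and $\cO(\sigma n)$ for integer alphabets.

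Then I would analyze Routine \Algo{OccurringAvoidedWords}. The outer DFS visits every node of $\mathcal{T}(x)$ at most once and does $\cO(1)$ work at those of word-depth strictly less than $k-1$, contributing $\cO(n)$. Nontrivial work occurs only at an internal node $v$ with $\mathcal{D}(v)=k-1$: following the suffix link is $\cO(1)$, and then for each child $v'$ of $v$ we extract $\alpha=\mathcal{L}(v')[k]$, resolve the child of \textit{suffix-link}$[v]$ labelled $\alpha$, and evaluate the inequality. The total number of children $v'$ examined across all such $v$ is at most the number of edges of $\mathcal{T}(x)$, namely $\cO(n)$, and each step is $\cO(1)$ with the child indexing above. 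Hence this phase runs in $\cO(n)$ time. Summing all phases gives $\cO(n)$ time and space for a fixed-sized alphabet and $\cO(\sigma n)$ time for integer alphabets, matching the statement. The step that I expect to draw the most scrutiny is justifying constant-time access to the nodes representing $w_p$ and $w_i$ for each minimal absent word, because it conflates the output format of the MAW algorithm with the available suffix-tree navigation; everything else is routine suffix-tree accounting.
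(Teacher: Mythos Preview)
Your proof is correct and follows essentially the same decomposition as the paper: preprocess $\mathcal{T}(x)$, bound \Algo{AbsentAvoidedWords} via the linear (resp.\ $\cO(\sigma n)$) MAW enumeration plus $\cO(1)$ per-word node access, and bound \Algo{OccurringAvoidedWords} by counting children of depth-$(k{-}1)$ explicit nodes. The paper resolves the $\cO(1)$ locus retrieval for $w_p$ exactly as you anticipate (an $\cO(n)$-preprocessed weighted-ancestor/locus query, citing \cite{GawrychowskiLN14}); the only minor slip is that for integer alphabets the suffix tree itself still builds in $\cO(n)$ via Farach's algorithm, though this does not affect your final $\cO(\sigma n)$ bound.
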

 \begin{proof}
Constructing the suffix tree $\mathcal{T}(x)$ of the input word $x$  takes time and space $\cO(n)$ for word over a fixed-sized alphabet~\cite{CHL07}. Once the suffix tree is constructed, computing arrays  $\mathcal{D}$ and $\mathcal{C}$ by traversing $\mathcal{T}(x)$ requires time and space $\cO(n)$. Note that the path-labels of the nodes of $\mathcal{T}(x)$ can by implemented in time and space $\cO(n)$ as follows: traverse the suffix tree to compute for each  node $v$ the smallest index $i$ of the terminal nodes of the subtree rooted at $v$. Then $\mathcal{L}(v) =  x[i\dd i+\mathcal{D}(v)-1]$.   

Next, Routine \Algo{AbsentAvoidedWords} requires time $\cO(n)$. It starts by computing all minimal absent words of $x$, which can be achieved in time and space $\cO(n)$ over a fixed-sized alphabet~\cite{Barton2014}. The rest of the procedure deals with checking each of the $\cO(n)$ minimal absent words of length $k$. Checking each minimal absent word $w$ to determine whether it is a $\rho$-avoided word or not requires time $\cO(1)$. In particular, an $\cO(n)$-time preprocessing of $\mathcal{T}(x)$ allows the retrieval of the (implicit or explicit) node in $\mathcal{T}(x)$ corresponding to  the longest proper prefix of $w$ in time $\cO(1)$~\cite{GawrychowskiLN14}. 
Finally, Routine \Algo{OccurringAvoidedWords} requires time $\cO(n)$. It traverses the suffix tree $\mathcal{T}(x)$ to allocate all  explicit node  of word-depth $k-1$. Then for each such node, the procedure check every (explicit or implicit) child  of word-depth $k$. The total number of these children is at most $n-k+1$. For every  child node, the procedure checks whether its  path-label  is a $\rho$-avoided word in time $\cO(1)$. 

For integer alphabets, the suffix tree can be constructed in time $\cO(n)$~\cite{farach1997optimal} and all minimal absent words can be computed in time $\cO(\sigma n)$~\cite{Barton2014}. The efficiency of Algorithm \Algo{AvoidedWords} is then limited by the total number of words to be considered, which, by Lemma~\ref{lem2}, is bounded from above by $\cO(\sigma n)$. \qed
 \end{proof}
\noindent Lemmas~\ref{lem:cor} and~\ref{lem:ana} imply the first result of this article.
 \begin{theorem}
Algorithm \Algo{AvoidedWords} solves Problem \textsc{AvoidedWordsComputation} in time and space $\cO(n)$. For integer alphabets, the algorithm solves the problem  in time $\cO(\sigma n)$. 
\end{theorem}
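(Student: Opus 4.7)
The theorem is essentially a repackaging of the two immediately preceding lemmas, so my proof plan is to state it as a direct corollary rather than redo any work. Concretely, I would invoke \Cref{lem:cor} to settle correctness: it already certifies that Algorithm \Algo{AvoidedWords} reports exactly the set of $\rho$-avoided words of length $k$ in $x$, by splitting the output into the absent case (handled via minimal absent words and \Cref{lem1}) and the occurring case (handled via the path-label characterisation of \Cref{lem:occ}). Then I would invoke \Cref{lem:ana} to obtain both complexity bounds in a single sweep: the $\cO(n)$ time and space bound for fixed-sized alphabets, and the $\cO(\sigma n)$ time bound for integer alphabets.

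The only thing I would do beyond quoting the lemmas is to explicitly observe that the outputs of Routines \Algo{AbsentAvoidedWords} and \Algo{OccurringAvoidedWords} are disjoint and together exhaust the set of $\rho$-avoided words of length $k$; disjointness is immediate since one routine only reports absent words ($f(w)=0$) while the other only reports occurring words (those found as path-labels of children of an explicit node at word-depth $k-1$), and exhaustiveness is precisely what \Cref{lem:cor} states. I would also remark that Problem \textsc{AvoidedWordsComputation} is exactly the problem solved by \Algo{AvoidedWords}, matching the input/output signature in the problem definition box.

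There is no real obstacle to overcome at this point: the hard work has already been absorbed into the two lemmas. If I had to flag a subtlety, it would be ensuring that the complexity claim of \Cref{lem:ana} covers the reporting cost itself, not just the traversal; but this is handled because by \Cref{lem2} the total output size is $\cO(\sigma n)$, which is dominated by the stated running times. Hence the proof I would write is a two-sentence combination of \Cref{lem:cor} and \Cref{lem:ana}, with a final \qed.
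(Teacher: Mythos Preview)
Your proposal is correct and matches the paper's approach exactly: the paper simply states that \Cref{lem:cor} and \Cref{lem:ana} together imply the theorem, with no additional argument. Your extra observations about disjointness and output size are sound but not strictly needed, since they are already absorbed into those two lemmas.
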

\subsection{Optimal Computation of all $\rho$-Avoided Words}
Although the biological motivation is yet to be shown for this, we show here how we can modify Algorithm  \Algo{AvoidedWords} so that it computes {\em all} $\rho$-avoided words (of all lengths) in a given word $x$ of length $n$ over an alphabet of size $\sigma$ in $\cO(\sigma n)$-time and $\cO(\sigma n)$-space. We further show that this algorithm is in fact time-optimal.
\begin{lemma}\label{lem7}
The upper bound $\cO(\sigma n)$ on the number of minimal absent words of a word 
of length $n$ over an alphabet of size $\sigma$ is tight if $2 \leq \sigma \leq n$.
\end{lemma}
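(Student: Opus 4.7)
My plan is to exhibit, for infinitely many admissible pairs $(n,\sigma)$ in the range $2 \leq \sigma \leq n$, an explicit word $x$ of length $n$ over an alphabet $\Sigma$ with $|\Sigma|=\sigma$ whose set of minimal absent words has cardinality $\Omega(\sigma n)$. Combined with the upper bound $\sigma n$ of~\cite{Mignosi02} already invoked in the proof of Lemma~\ref{lem2}, this yields the desired asymptotic tightness.

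The central construction I would use is a de Bruijn sequence. Recall that for each $k \geq 1$ and each alphabet $\Sigma$ of size $\sigma$ there exists a word $D_{\sigma, k}$ over $\Sigma$ of length $\sigma^k + k - 1$ in which every word of length $k$ over $\Sigma$ occurs as a factor exactly once. Given $\sigma$, I would fix any $k \geq 2$, set $n := \sigma^k + k - 1$, and take $x := D_{\sigma,k}$.

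To count the minimal absent words of $x$, I would argue as follows. Since every word of length $k$ over $\Sigma$ is a factor of $x$, for every word $w \in \Sigma^{k+1}$ that is absent from $x$ both length-$k$ proper factors $w_p, w_s$ (and hence all shorter proper factors, in particular $w_i$) occur in $x$; by Definition~\ref{def:maws}, $w$ is a minimal absent word. By the de Bruijn property the length-$(k+1)$ factors of $x$ are also pairwise distinct (two positions carrying the same length-$(k+1)$ factor would carry the same length-$k$ prefix, contradicting uniqueness of $k$-grams), so $x$ contains exactly $n - k = \sigma^k - 1$ factors of length $k+1$. The number of minimal absent words of length $k+1$ is therefore at least $\sigma^{k+1} - (\sigma^k - 1) = (\sigma - 1)\sigma^k + 1$. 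Since $\sigma^k = n - k + 1 = \Theta(n)$ for $k \leq \log_\sigma n$, this is already $\Omega(\sigma n)$, matching the upper bound.

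The main remaining obstacle is that the clean analysis above covers only the infinite subfamily of pairs $(n, \sigma) = (\sigma^k + k - 1,\ \sigma)$. To handle every $(n, \sigma)$ in the stated range one can pad the de Bruijn sequence with a fixed letter $a_1 \in \Sigma$ to reach the target length, and then re-run the argument with the bound $n - k$ on the number of length-$(k+1)$ factors to obtain at least $\sigma^{k+1} - (n-k)$ minimal absent words of length $k+1$; a short case split on where $n$ sits inside $[\sigma^k + k - 1,\ \sigma^{k+1} + k - 1)$ confirms that this count remains $\Omega(\sigma n)$. The opposite extreme $\sigma = \Theta(n)$ admits a considerably simpler direct proof: for $x = a_1 a_2 \cdots a_n$ the only length-$2$ factors are the $n-1$ consecutive bigrams $a_i a_{i+1}$, so the remaining $\sigma^2 - (n-1) = \Theta(\sigma n)$ bigrams are minimal absent words.
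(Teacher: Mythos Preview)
Your de Bruijn construction is correct and is a genuinely different (and arguably slicker) route than the paper's. The paper builds, for every pair $(n,\sigma)$, the explicit word
\[
x = a_2 a_1^m a_3 a_1^m a_4 a_1^m \cdots a_\sigma a_1^m a_1^{\,n-(\sigma-1)(m+1)},\qquad m=\Big\lfloor \tfrac{n}{\sigma-1}\Big\rfloor - 1,
\]
and counts the $\Omega(\sigma n)$ minimal absent words of the form $a_i a_1^j a_\ell$; you instead take $x=D_{\sigma,k}$ and count absent $(k{+}1)$-grams. Your approach buys a one-line counting argument (every absent $(k{+}1)$-gram is automatically a minimal absent word once all $k$-grams occur), at the price of invoking the existence of de Bruijn sequences and of hitting only the special lengths $n=\sigma^k+k-1$. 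The paper's construction is more elementary and works uniformly for \emph{every} $(n,\sigma)$ with $2\le\sigma\le n$, but requires checking which words of the family $a_i a_1^j a_\ell$ actually occur.

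One genuine issue: your padding step does not do what you claim. If you pad $D_{\sigma,k}$ with $a_1$'s up to length $n$ and use the crude bound of $n-k$ on the number of distinct $(k{+}1)$-factors, the resulting lower bound $\sigma^{k+1}-(n-k)$ collapses to a constant when $n$ is near the top of the interval $[\sigma^k+k-1,\ \sigma^{k+1}+k-1)$; e.g.\ at $n=\sigma^{k+1}+k-2$ it gives only $2$, not $\Omega(\sigma n)$. No ``short case split'' within that interval repairs this with the same construction. Fortunately you do not need the padding at all: for an asymptotic tightness statement it suffices to exhibit infinitely many pairs, and your clean de Bruijn count (together with the $k=1$ case you essentially rederive in the last paragraph for $\sigma=n$) already delivers that. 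If you want a statement for \emph{every} $(n,\sigma)$, the paper's block construction is the simplest fix.
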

\begin{proof}
Let $\Sigma =\{a_1,a_2\}$, i.e.~$\sigma=2$, and consider the word $x=a_2 a_1^{n-2} a_2$ of length $n$. All words of the form $a_2 a_1^k a_2$ for $0 \leq k \leq n-3$ are minimal absent words in $x$. Hence $x$ has at least $n-2=\Omega(n)$ minimal absent words.

Let $\Sigma=\{a_1,a_2,a_3,\ldots,a_\sigma\}$ with $3 \leq \sigma \leq n$, and consider the word $x=a_2 a_1^k a_3 a_1^k a_4 a_1^k \ldots a_i a_1^k a_{i+1} \ldots a_{\sigma} a_1^k a_1^{n-(\sigma-1)(k+1)}$, where $k=\lfloor \frac{n}{\sigma-1}\rfloor -1$. Note that $|x|=n$. Further note that $a_i a_1^j$ is a factor of $x$ for all $2 \leq i \leq \sigma $ and $0 \leq j \leq k$. Similarly, $a_1^j a_l$ is a factor of $x$ for all $3 \leq l \leq \sigma $ and $0 \leq j \leq k$. Thus all proper factors of all the strings in set $S=\{ a_i a_1^j a_l \: | \: 0 \leq j \leq k, \: 2 \leq i \leq \sigma, \: 3 \leq l \leq \sigma \}$ occur in $x$. The only strings in $S$ though that occur in $x$ are the ones of the form $a_i a_1^k a_{i+1}$, for all $2 \leq i < \sigma$.
Hence $x$ has at least $(\sigma-1)(\sigma-2)(k+1)-(\sigma-2)=(\sigma-1)(\sigma-2)\lfloor \frac{n}{\sigma-1}\rfloor-(\sigma-2)=\Omega(\sigma n)$ minimal absent words.
\qed 
\end{proof}
\begin{lemma}\label{lem8}
The total number of $\rho$-avoided words in a word $x$ of length $n$ over an alphabet of size $\sigma \leq n$ is bounded from above by $\cO(\sigma n)$ and this bound is tight.
\end{lemma}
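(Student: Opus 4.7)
The plan is to prove the upper bound using Lemma~\ref{lem1} together with Lemma~\ref{lem:occ}, and then to establish tightness by reusing the construction from Lemma~\ref{lem7}.

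For the upper bound, Lemma~\ref{lem1} lets me split the $\rho$-avoided words of $x$ into two classes: the minimal absent words of $x$, and the words occurring in $x$ whose standard deviation is at most $\rho$. The first class has cardinality $\cO(\sigma n)$ by the Mignosi--Restivo--Sciortino bound~\cite{Mignosi02}. For the second class, Lemma~\ref{lem:occ} says that if $w$ occurs in $x$ and is $\rho$-avoided then its longest proper prefix $w_p$ must be the path-label of an \emph{explicit} node of $\mathcal{T}(x)$. Since $\mathcal{T}(x)$ has at most $n-1$ explicit internal nodes and each such node has at most $\sigma$ outgoing edges, at most $\sigma(n-1)$ candidate words $w=w_p\alpha$ arise in total. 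Adding the two contributions gives the $\cO(\sigma n)$ upper bound.

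For the lower bound I would take the word $x$ produced in the proof of Lemma~\ref{lem7}, which already has $\Omega(\sigma n)$ minimal absent words. Each minimal absent word $w$ of $x$ satisfies $f(w)=0$ while $E(w)>0$, so $\std(w)<0$; more quantitatively, $f(w_p),f(w_s)\ge 1$ and $f(w_i)\le n$ together imply $E(w)\ge 1/n$, which in turn forces $\std(w)\le -1/n$ regardless of which of the two cases in~\eqref{eq:2} applies. Picking $\rho$ just large enough in absolute value (for instance $\rho=-1/n$) therefore turns every one of the $\Omega(\sigma n)$ minimal absent words of $x$ into a $\rho$-avoided word, which witnesses the tightness of the $\cO(\sigma n)$ bound.

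The main obstacle is pinning down exactly what is being claimed by ``tight.'' If the intended reading is merely that the $\cO(\sigma n)$ dependence on $\sigma$ and $n$ is unavoidable for \emph{some} $\rho<0$, the argument above goes through directly. If instead one wants tightness for an arbitrary fixed constant $\rho$, the Lemma~\ref{lem7} construction is not strong enough, because most of its minimal absent words $w$ only achieve $E(w)=\Theta(1/f(w_i))$, which can be as small as $\Theta(1/n)$. In that stronger reading I would replace the construction by a de~Bruijn-like sequence: over an alphabet of size $\sigma$, a de~Bruijn sequence of order $k$ has length $\Theta(\sigma^k)$, every factor of length $k$ occurs exactly once, and the length-$(k{+}1)$ absent words form a set of size $\Theta(\sigma^{k+1})=\Theta(\sigma n)$ that are all minimal absent; each such $w$ has $f(w_p)=f(w_s)=1$ and $f(w_i)=\sigma$, so $E(w)=1/\sigma$ and $|\std(w)|$ stays above a positive constant, making all of them $\rho$-avoided for any fixed $\rho$ with $|\rho|$ sufficiently small. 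Either way the upper bound is attained up to constants.
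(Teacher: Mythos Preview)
Your proof is correct and follows essentially the same route as the paper: split via Lemma~\ref{lem1} into minimal absent words (bounded by $\cO(\sigma n)$ from~\cite{Mignosi02}) and occurring words (handled through Lemma~\ref{lem:occ}), and witness tightness by feeding the Lemma~\ref{lem7} construction the threshold $\rho=-1/n$ so that every minimal absent word satisfies $\std(w)\le -1/n\le\rho$. The only differences are that the paper bounds the occurring $\rho$-avoided words more sharply as $\cO(n)$ (each such $w$ sits at the top of a distinct edge of $\mathcal{T}(x)$, and there are at most $2n-1$ edges) rather than your $\sigma(n-1)$, and the paper does not pursue the fixed-constant-$\rho$ strengthening you sketch with de~Bruijn sequences---its notion of ``tight'' is exactly your first reading.
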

\begin{proof}
By Lemma~\ref{lem1}, every $\rho$-avoided word is either occurring or a minimal absent word.
The set of occurring $\rho$-avoided words in $x$ can be injected to the set of explicit nodes of $\mathcal{T}(x)$ by Lemma~\ref{lem:occ}. It is well known that the number of explicit nodes of $\mathcal{T}(x)$ is $\Theta (n)$~\cite{CHL07} (at most $2n$) and hence it follows that the number of occurring $\rho$-avoided words is $\cO(n)$. Furthermore it is known that the total number of minimal absent words in $x$ is $\cO(\sigma n)$~\cite{Mignosi02}. Hence the number of $\rho$-avoided words is bounded from above by $\cO(\sigma n)$. Based on Lemma~\ref{lem7}, we know that for any alphabet of size $2 \leq \sigma \leq n$ there exist words with $\Omega(\sigma n)$ minimal absent words. Consider such a word and some $\rho \geq - \frac {1}{n}$. Then every minimal absent word is $\rho$-avoided since for any such word $E(w) \geq \frac {1}{n}$, $f(w)=0$ and hence $std(w) \leq - \frac {1}{n} \leq \rho$. Thus the bound is attainable. \qed 
\end{proof}
It is clear that if we just remove the condition on the length of each minimal absent word in Line 2 of \Algo{AbsentAvoidedWords} we then compute all absent $\rho$-avoided words in time and space $\cO(\sigma n)$. In order to compute all occurring $\rho$-avoided words in $x$ it suffices by Lemma~\ref{lem:occ} to  investigate the children of explicit nodes. We can thus traverse the suffix tree $\mathcal{T}(x)$ and for each explicit internal node, check for all of its children (explicit or implicit) whether their path-label is a $\rho$-avoided word. We can do this in $\cO(1)$ time as above. The total number of these children is at most $2n-1$, as this is the bound on the number of edges of $\mathcal{T}(x)$~\cite{CHL07}.
This modified algorithm is clearly time-optimal for fixed-sized alphabets as it then runs in time and space $\cO(n)$. The time optimality for integer alphabets follows directly from Lemmas~\ref{lem7} and~\ref{lem8}. Hence we obtain the following result.
\begin{theorem}
Given a word $x$ of length $n$ over an integer alphabet of size $\sigma$ and a real number $\rho < 0$, all $\rho$-avoided words in $x$ can be computed in time and space $\cO(\sigma n)$. This is time-optimal if $\sigma \leq n$.
\end{theorem}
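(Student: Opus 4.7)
The plan is to adapt Algorithm \Algo{AvoidedWords} by removing the length restriction in both of its routines, to bound the work of each phase separately using the structural results already established, and to derive time-optimality from the lower bound of Lemma~\ref{lem8}.

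First, for absent $\rho$-avoided words I would simply drop the condition $k=j-i+2$ on Line~2 of Routine \Algo{AbsentAvoidedWords}, so that the $\std$-test is performed on every minimal absent word returned by the subroutine of \cite{Barton2014}. Correctness is immediate from Lemma~\ref{lem1}, which guarantees that every absent $\rho$-avoided word is minimal absent. Enumerating all minimal absent words takes $\cO(\sigma n)$ time on an integer alphabet, and each test requires only the precomputed $\mathcal{C}$-values plus the $\cO(1)$-time retrieval of the explicit or implicit node corresponding to $w_p$ (as already invoked in Lemma~\ref{lem:ana}), so this phase stays in $\cO(\sigma n)$ time and space.

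Second, for occurring $\rho$-avoided words I would traverse $\mathcal{T}(x)$ and, for every explicit internal node $v$ with suffix-link target $u$, examine each child edge $(v,v')$ and check the candidate whose $w_p$ is the path-label of $v$. By Lemma~\ref{lem:occ}, every occurring $\rho$-avoided word arises this way. The four frequencies needed are $f(w)=\mathcal{C}(v')$, $f(w_p)=\mathcal{C}(v)$, $f(w_i)=\mathcal{C}(u)$, and $f(w_s)=\mathcal{C}(\Call{Child}{u,\alpha})$, where $\alpha$ is the first letter of the edge $(v,v')$; the child access is $\cO(1)$ on an integer alphabet, so each candidate is processed in constant time. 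Since the number of edges in $\mathcal{T}(x)$ is at most $2n-1$, this phase contributes only $\cO(n)$ time, and combined with the $\cO(\sigma n)$-time preprocessing and the first phase the overall bound is $\cO(\sigma n)$ in both time and space.

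For time-optimality when $\sigma \leq n$, I would invoke Lemma~\ref{lem8}: for each such $\sigma$ it exhibits, via the family of Lemma~\ref{lem7} and the choice $\rho \geq -1/n$, instances with $\Omega(\sigma n)$ $\rho$-avoided words. Since any correct algorithm must at least write them to the output, this matches the upper bound. The main subtlety is not a hard calculation but a bookkeeping check: to confirm that on an integer alphabet every operation used above --- suffix-link following, child-by-letter lookup, $\mathcal{C}$- and $\mathcal{D}$-queries, and the retrieval of the node representing the longest proper prefix of a minimal absent word --- runs in $\cO(1)$ time after $\cO(\sigma n)$-time preprocessing, which follows from \cite{farach1997optimal,GawrychowskiLN14} together with standard suffix-tree augmentations.
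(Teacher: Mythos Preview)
Your proposal is correct and mirrors the paper's own argument almost exactly: drop the length restriction in \Algo{AbsentAvoidedWords}, replace the depth-$k{-}1$ filter in \Algo{OccurringAvoidedWords} by a traversal that checks all children of every explicit internal node (bounded by the $2n-1$ edges of $\mathcal{T}(x)$), and derive time-optimality from the $\Omega(\sigma n)$ output-size lower bound of Lemma~\ref{lem8} (via Lemma~\ref{lem7}). The only addition on your side is the explicit bookkeeping paragraph about $\cO(1)$-time primitives on an integer alphabet, which the paper leaves implicit.
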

\begin{lemma}
The expected length of the longest $\rho$-avoided word in a word $x$ of length $n$ over an alphabet of size $\sigma>1$ is $\cO(\log_{\sigma} n)$ when the letters are independent and identically distributed random variables uniformly distributed.
\end{lemma}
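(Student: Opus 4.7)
The plan is to prove a deterministic structural bound (every $\rho$-avoided word has length at most $L+2$, where $L$ is the length of the longest repeated factor of $x$) and then combine it with the classical random-string estimate $E[L]=\cO(\log_\sigma n)$.

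For the structural bound, let $w$ be any $\rho$-avoided word of $x$. By Lemma~\ref{lem1}, $w$ is either occurring or a minimal absent word. If $w=aub$ is minimal absent with $a,b\in\Sigma$ and $u\in\Sigma^*$, then $au$ and $ub$ both occur in $x$ while $aub$ does not, so no occurrence of $u$ that is preceded by $a$ can also be followed by $b$; hence the occurrences of $u$ witnessed by $au$ and by $ub$ are at distinct positions, $u$ is a repeated factor, and $|w|=|u|+2\leq L+2$. If instead $w$ occurs, Lemma~\ref{lem:occ} forces $w_p$ to be the path-label of an explicit node $v$ of $\mathcal{T}(x)$; since $w=w_p\alpha$ also occurs in $x$, the node $v$ is right-extendable and must be an internal branching node with at least two distinct right-extensions, so $w_p$ is a repeated factor and $|w|=|w_p|+1\leq L+1$. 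Either way $|w|\leq L+2$.

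For the tail estimate on $L$, I would use a straightforward union bound over pairs of distinct starting positions. Under i.i.d.\ uniform letters, the probability that two length-$\ell$ factors agree is exactly $\sigma^{-\ell}$, which is clear for non-overlapping pairs and follows from a short periodicity argument for overlapping pairs (the induced equivalence classes on the window of length $d+\ell$ have cardinality $d=|i-j|$, giving probability $\sigma^{d-(d+\ell)}=\sigma^{-\ell}$). Thus
$$P(L\geq \ell)\;\leq\;\binom{n-\ell+1}{2}\sigma^{-\ell}\;\leq\;n^{2}\sigma^{-\ell}.$$
Setting $\ell = 2\log_\sigma n + t$ yields $P(L\geq 2\log_\sigma n+t)\leq \sigma^{-t}$, and summing the tail,
$$E[L]\;=\;\sum_{\ell\geq 1}P(L\geq \ell)\;\leq\;2\log_\sigma n+\sum_{t\geq 0}\sigma^{-t}\;=\;2\log_\sigma n+\frac{\sigma}{\sigma-1}\;=\;\cO(\log_\sigma n).$$
Combining with the structural bound, the expected length of the longest $\rho$-avoided word is at most $E[L]+2=\cO(\log_\sigma n)$, as claimed.

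The only mildly subtle point is the occurring case of the structural step: one must verify that the node path-labelled $w_p$ is genuinely a branching internal node (never a leaf), so that $w_p$ actually witnesses a repeat; this is immediate from $w_p$ being right-extendable by $\alpha$ inside $x$. Beyond that, the argument is a routine combination of a deterministic combinatorial observation with the textbook collision bound on the longest repeated factor of a random string.
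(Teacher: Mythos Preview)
Your argument is correct and follows essentially the same route as the paper: bound the length of any $\rho$-avoided word by $L+2$ via the same two cases (occurring words through Lemma~\ref{lem:occ} and the explicit-node/repeated-factor correspondence, minimal absent words through the repeated infix $w_i$), then invoke $E[L]=\cO(\log_\sigma n)$. The only difference is that you supply a self-contained union-bound derivation of the latter, whereas the paper simply cites it as a known fact.
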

\begin{proof}
By Lemma~\ref{lem:occ} the length of the longest occurring word is bounded above by the word-depth of the deepest internal explicit node in $\mathcal{T}(x)$ incremented by 1. We note that the greatest word-depth of an internal node corresponds to the longest repeated factor in word $x$. Moreover, for a word $w$ to be a minimal absent word, $w_i$ must appear at least twice in $x$ (in the occurrences of $w_p$ and $w_s$). Hence the length of the longest $\rho$-avoided word is bounded by the length of the longest repeated factor in $x$ incremented by 2. The expected length of the longest repeated factor in a word is known to be $\cO(\log_{\sigma} n)$~\cite{SA} and hence the lemma follows.\qed
\end{proof}
\section{Implementation and Experimental Results}

Algorithm $\Algo{AvoidedWords}$ was implemented as a program to compute the $\rho$-avoided words of length $k$ in one or more input sequences. The program was implemented in the \texttt{C++} programming language and developed under GNU/Linux operating system. The input parameters are a (Multi)FASTA file with the input sequences(s), an integer $k > 2$, and a real number $\rho < 0$. The output is a file with the set of $\rho$-avoided words of length $k$ per input sequence. The implementation is distributed under the GNU General Public License, and it is available at \url{http://github.com/solonas13/aw}. The experiments were conducted on a Desktop PC using one core of Intel Core i5-4690 CPU at 3.50GHz under GNU/Linux. The programme was compiled with \texttt{g++} version 4.8.4 at optimisation level 3 (-O3). We also implemented a brute-force approach for the computation of $\rho$-avoided words. We mainly used it to confirm the correctness of our implementation. Here we do not plot the results of the brute-force approach as it is easily understood that it is orders of magnitude slower than our approach. 

\begin{figure}[t]
	\begin{center}
	\subfloat[Time for $n=1$\scriptsize{M} and $\rho=-10$]{\includegraphics[width=0.35\textwidth,angle=270]{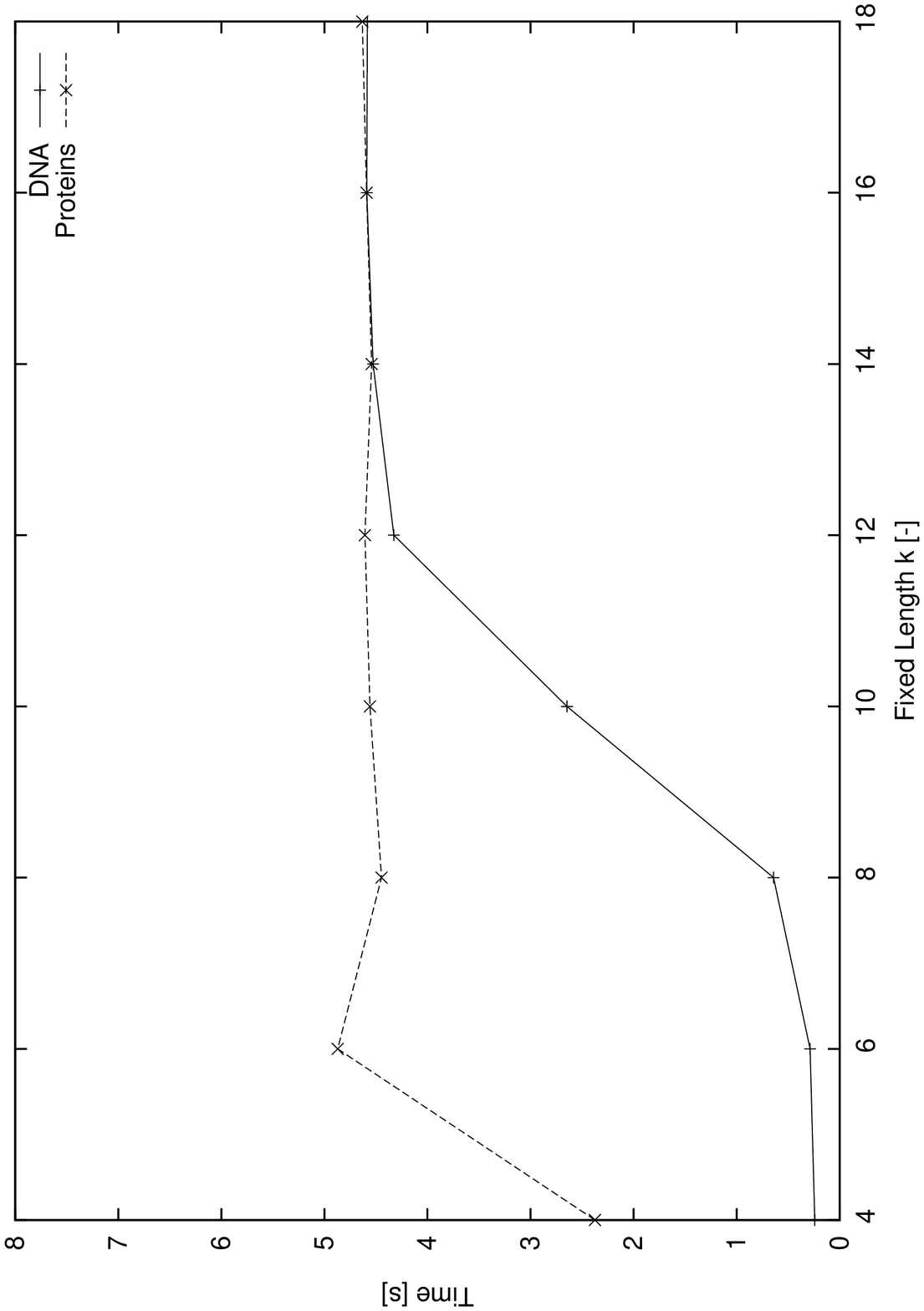}}
	\subfloat[Time for $n=1$\scriptsize{M} and $k=8$]{\includegraphics[width=0.35\textwidth,angle=270]{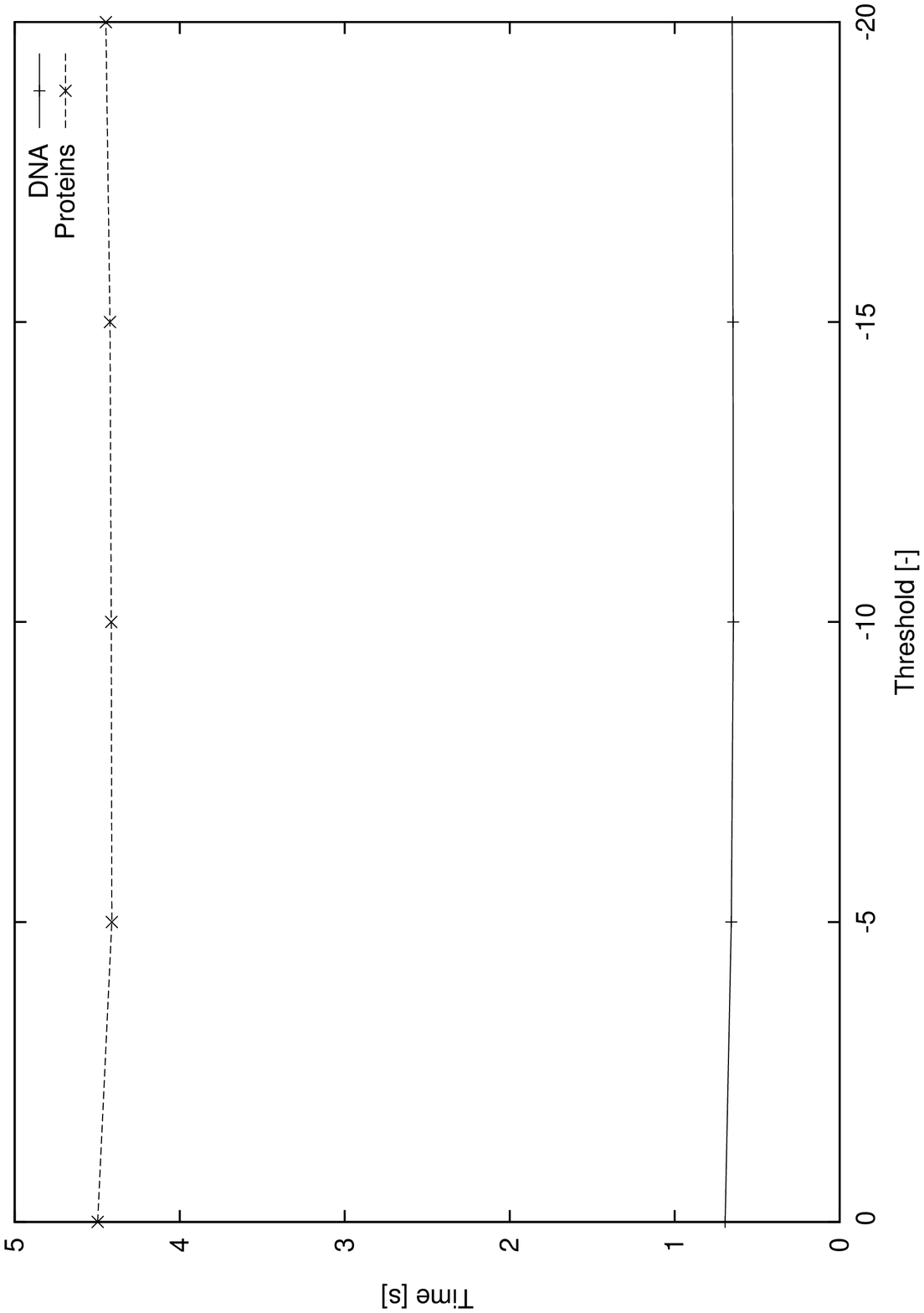}}
	\caption{Elapsed time of Algorithm $\Algo{AvoidedWords}$ using synthetic DNA ($\sigma=4$) and proteins ($\sigma=20$) data of length $1$M for variable $k$ and variable $\rho$.}
	\label{fig:fixed}
	\end{center}
\end{figure}

To evaluate the time performance of our implementation, synthetic DNA ($\sigma=4$) and proteins ($\sigma=20$) data were used. The input sequences were generated using a randomised script. In the first experiment, our task was to establish that the performance of the program does not essentially depend on $k$ and $\rho$; i.e., the elapsed time of the program remains unchanged up to some constant with increasing values of $k$ and decreasing values of $\rho$.  As input datasets, for this experiment, we used a DNA and a proteins sequence both of length $1$M (1 Million letters). For each sequence we used different values of $k$ and $\rho$. The results, for elapsed time are plotted in Fig.~\ref{fig:fixed}. It becomes evident from the results that the time performance of the program remains unchanged up to some constant. The longer time required for the proteins sequences for small values of $k$ is explained by the increased number of branching nodes in this depth in the corresponding suffix tree due to the size of the alphabet ($\sigma=20$). To confirm this we counted the number of nodes considered by the algorithm to compute the $\rho$-avoided words for $k=4$ and $\rho=-10$ for both sequences. The number of considered nodes for the DNA sequence was $260$ whereas for the proteins sequence it was $1,585,510$. 

In the second experiment, our task was to establish the fact that the elapsed time and memory usage of the program grow linearly with $n$, the length of the input sequence.  As input datasets, for this experiment, we used synthetic DNA and proteins sequences ranging from $1$ to $128$ M. For each sequence we used constant values for $k$ and $\rho$: $k=8$ and $\rho=-10$. The results, for elapsed time and peak memory usage, are plotted in Fig.~\ref{fig:linear}. It becomes evident from the results that the elapsed time and memory usage of the program grow linearly with $n$. The longer time required for the proteins sequences compared to the DNA sequences for increasing $n$ is explained by the increased number of branching nodes in this depth ($k=8$) in the corresponding suffix tree due to the size of the alphabet ($\sigma=20$).  To confirm this we counted the number of nodes considered by the algorithm to compute the $\rho$-avoided words for $n=64$M for both the DNA and the proteins sequence. The number of nodes for the DNA sequence was $69,392$ whereas for the proteins sequence it was $43,423,082$.

\begin{figure}[!t]
	\begin{center}
	\subfloat[Time for $k=8$ and $\rho=-10$] {\includegraphics[width=0.35\textwidth,angle=270]{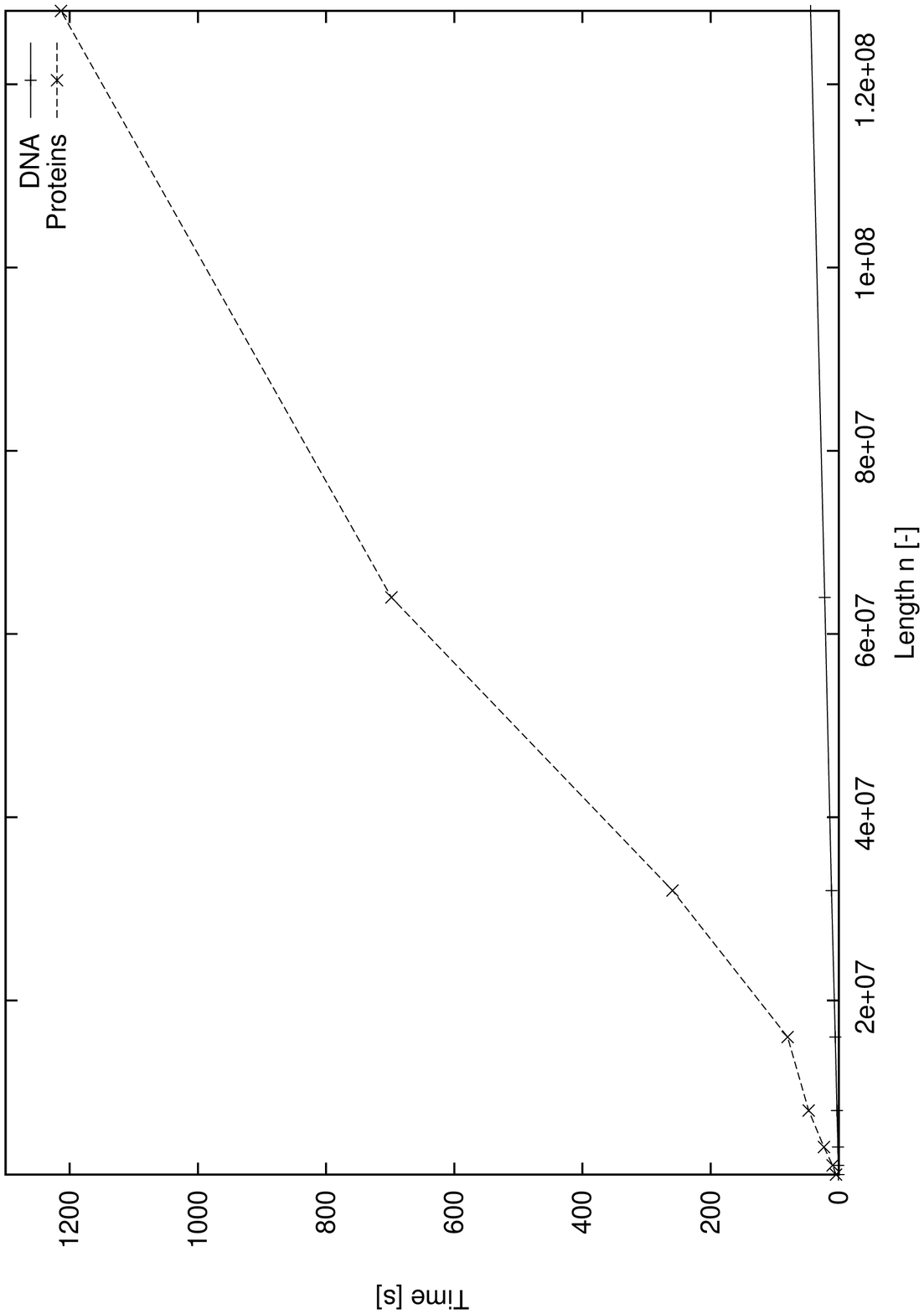}}
	\subfloat[Memory for $k=8$ and $\rho=-10$]{\includegraphics[width=0.35\textwidth,angle=270]{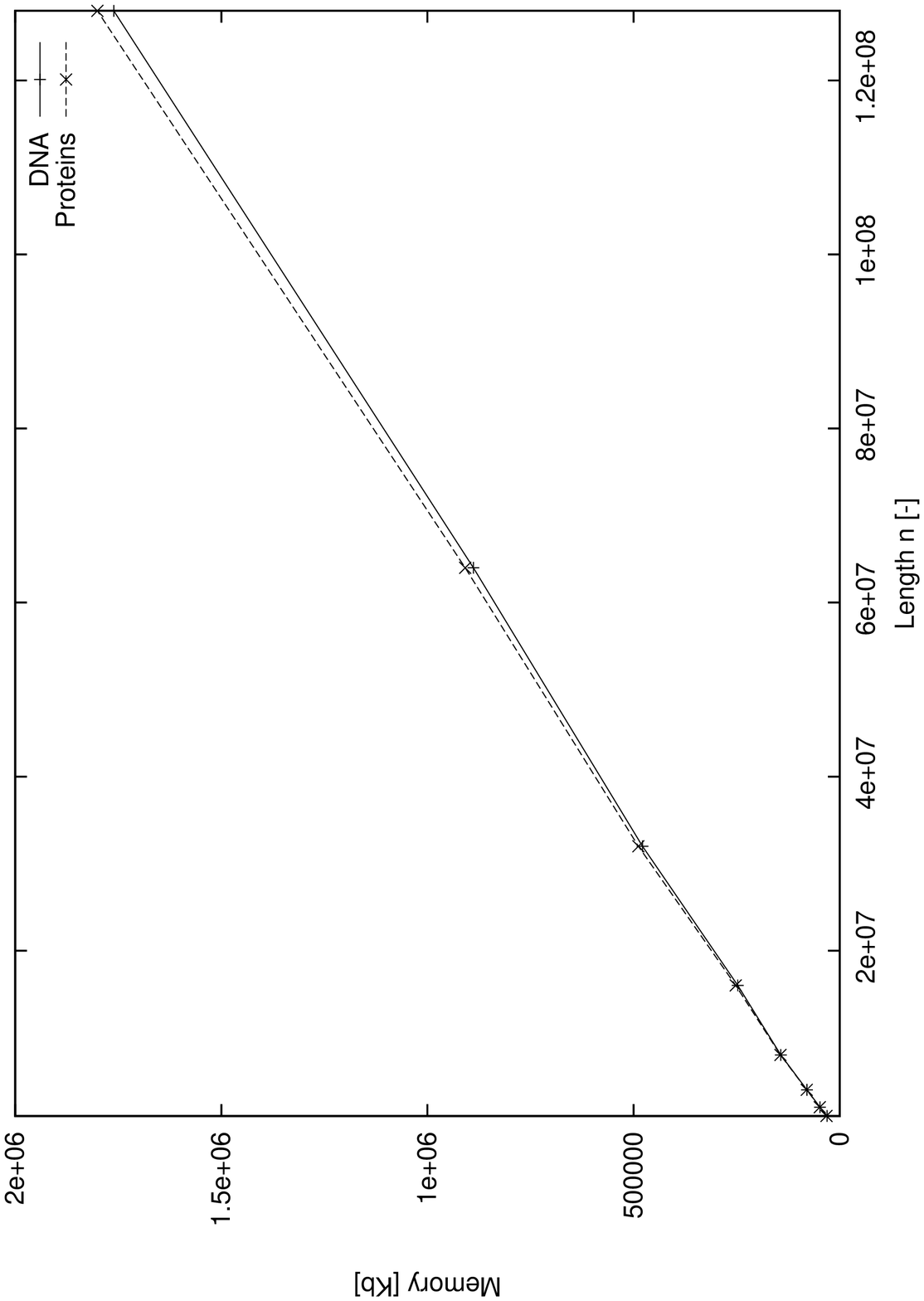}}
	\caption{Elapsed time and peak memory usage of Algorithm $\Algo{AvoidedWords}$ using synthetic DNA ($\sigma=4$) and proteins ($\sigma=20$) data of length $1$M to $128$M.}
	\label{fig:linear}
	\end{center}
\end{figure}

In the next experiment, our task was to evaluate the time and memory performance of our implementation with real data. As input datasets, for this experiment, we used all chromosomes of the human genome. Their lengths range from around $46$M (chromosome 21) to around $249$M (chromosome 1). For each sequence we used $k=8$ and $\rho=-10$. The results, for elapsed time and peak memory usage, are plotted in Fig.~\ref{fig:real}. The results with real data confirm that the elapsed time and memory usage of the program grow linearly with $n$.

\begin{figure}[!t]
	\centering
	\subfloat[Time for $k=8$ and $\rho=-10$]{\includegraphics[width=0.35\textwidth,angle=270]{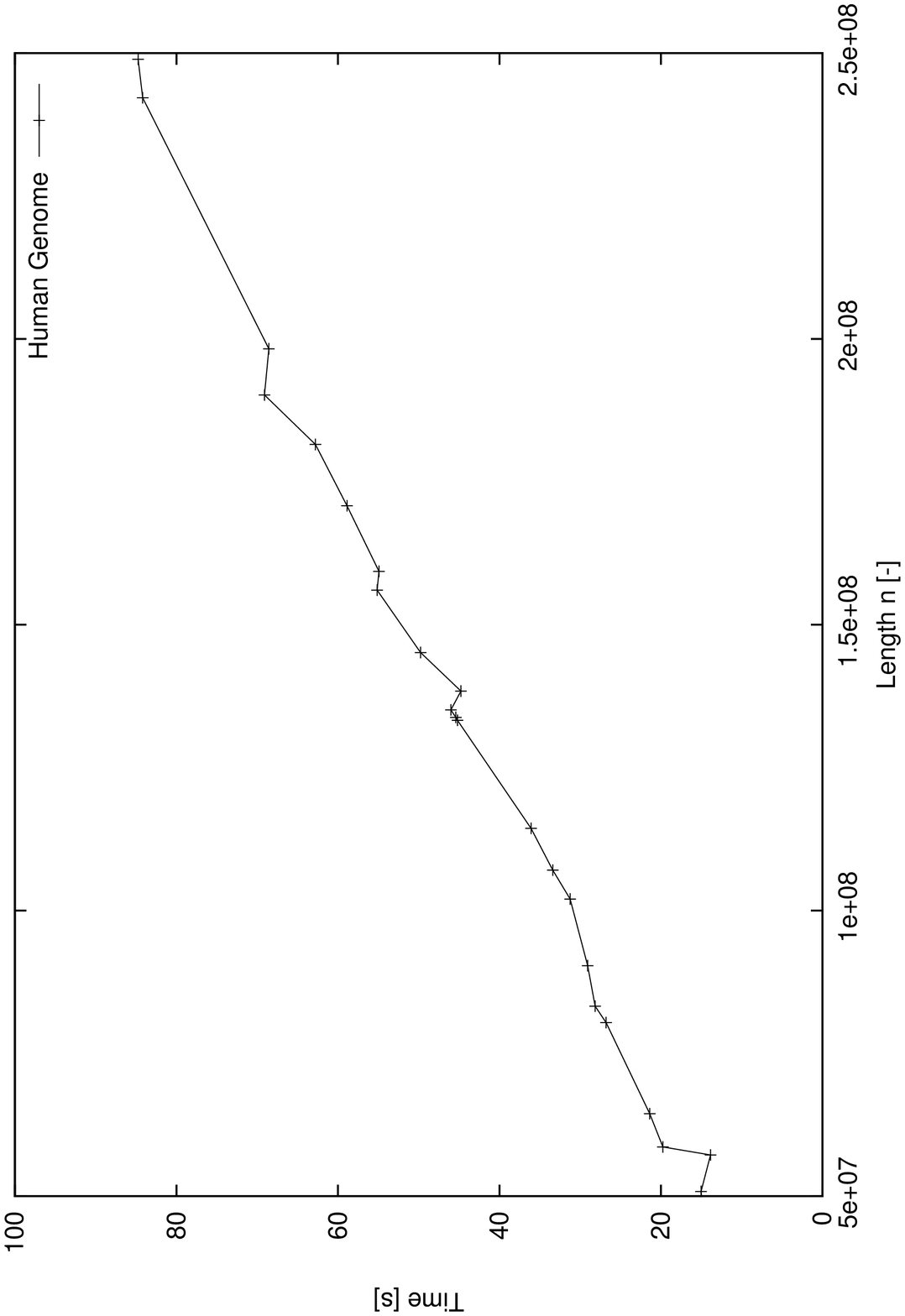}}
	\subfloat[Memory for $k=8$ and $\rho=-10$]{\includegraphics[width=0.35\textwidth,angle=270]{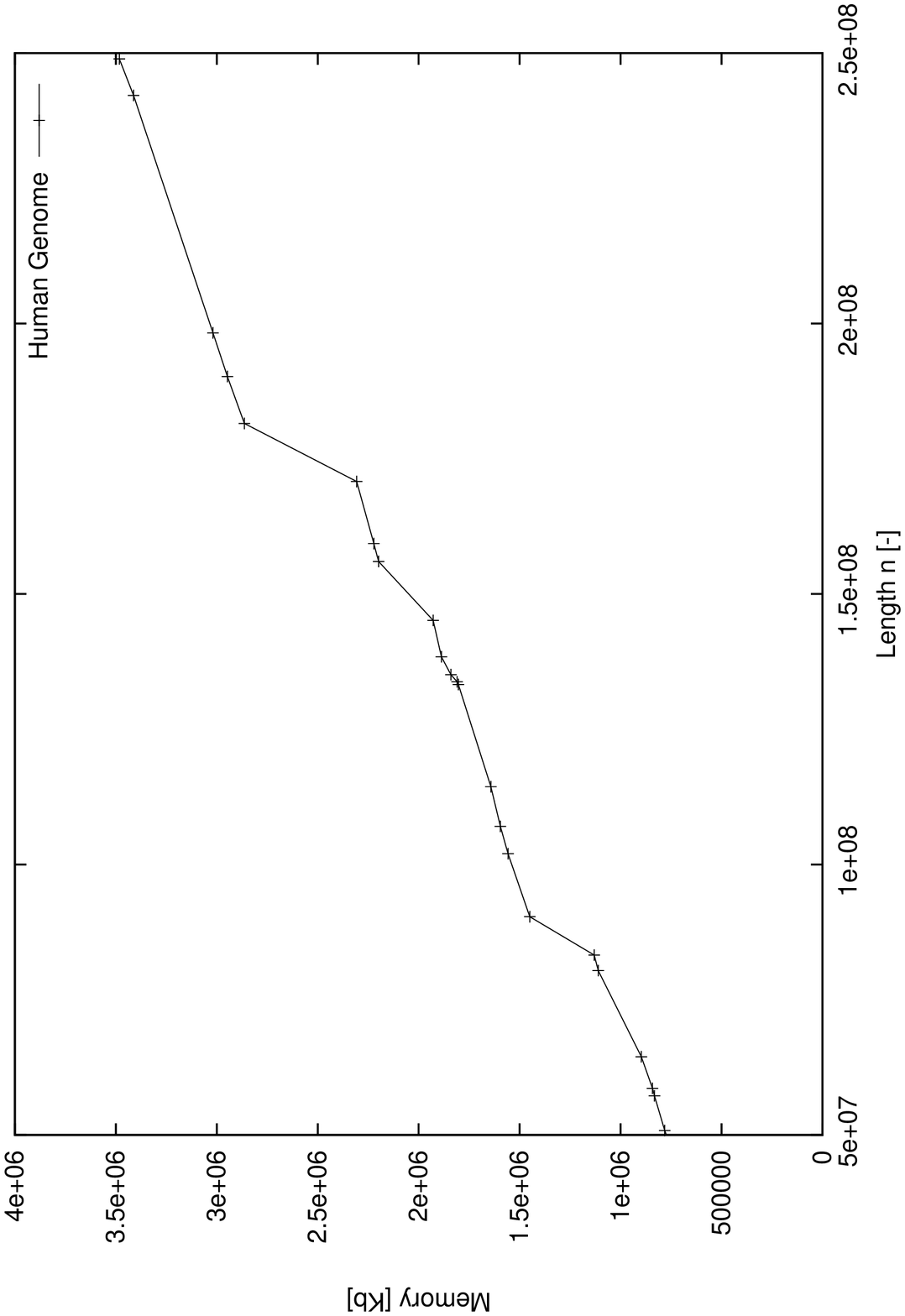}}
	\caption{Elapsed time and peak memory usage of Algorithm $\Algo{AvoidedWords}$ using all chromosomes of the human genome.}
	\label{fig:real}
\end{figure}

\paragraph*{Real Application.} We computed the set of avoided words for $k=6$ (hexamers) and $\rho=-10$
in the complete genome of {\em E. coli} and sorted the output in increasing order of their standard deviation. The most avoided words were extremely enriched in {self-complementary} (palindromic) hexamers. In particular, within the output of 28 avoided words, 23 were self-complementary; and the 17 most avoided ones were {\em all} self-complementary. For comparison, we computed the set of avoided words for $k=6$ and $\rho=-10$ from an eukaryotic sequence: a segment of the human chromosome 21 (its leftmost segment devoid of \texttt{N}'s) equal to the length of the {\em E. coli} genome. In the output of 10 avoided words, no self-complementary hexamer was found. Our results confirm that the restriction endonucleases which target self-complementary sites are not found in eukaryotic sequences~\cite{Rusinov2015}.

Our immediate target is to investigate the avoidance of words in Genomic Regulatory Blocks (GRBs)~\cite{Akalin2009} within the same organism and across evolution.

%


\bibliographystyle{splncs03}
\bibliography{reference}

\end{document}